  \providecommand\BibTeX{{%
    \normalfont B\kern-0.5em{\scshape i\kern-0.25em b}\kern-0.8em\TeX}}}
\newtheorem{theorem}{Theorem}
\newtheorem{lemma}[theorem]{Lemma}
\newcommand{\Mat}[1]{\mathbf{#1}}
\newcommand{\fullname}{\emph{Fair} \emph{G}raph based Rec\emph{O}mmendation}
\newcommand{\shortname}{\emph{FairGo} }
\begin{document}

\vspace{-0.2cm}
\title{Learning Fair Representations for Recommendation: \\A Graph-based Perspective }

\author{Le Wu}
\affiliation{\institution{Hefei University of Technology}}
\affiliation{\institution{Institute of Artificial Intelligence, Hefei Comprehensive National Science Center}}
\affiliation{\institution{Intelligent Interconnected Systems Laboratory of Anhui Province
}}
\email{lewu.ustc@gmail.com} 

\author{Lei Chen}
\affiliation{\institution{Hefei University of Technology}}
\affiliation{\institution{Intelligent Interconnected Systems Laboratory of Anhui Province
}}
\authornotemark[1] 
\email{chenlei.hfut@gmail.com}

\author{Pengyang Shao}
\affiliation{\institution{Hefei University of Technology}}
\affiliation{\institution{Intelligent Interconnected Systems Laboratory of Anhui Province
}}
\email{shaopymark@gmail.com}

\author{Richang Hong}
\affiliation{\institution{Hefei University of Technology}} 
\affiliation{\institution{Intelligent Interconnected Systems Laboratory of Anhui Province}} 
\email{hongrc.hfut@gmail.com}

\author{Xiting Wang}
\affiliation{\institution{Microsoft Research}}
\email{xitwan@microsoft.com}

\author{Meng Wang}
\authornote{Lei Chen and Meng Wang are corresponding authors.}
\affiliation{\institution{Hefei University of Technology}}
\affiliation{\institution{Institute of Artificial Intelligence, Hefei Comprehensive National Science Center}}
\affiliation{\institution{Intelligent Interconnected Systems Laboratory of Anhui Province}}  
\email{eric.mengwang@gmail.com}

\vspace{-0.3cm}
\begin{abstract}
As a key application of artificial intelligence, recommender systems are among the most pervasive computer aided systems to help users find potential items of interests. Recently, researchers paid considerable attention to fairness issues for artificial intelligence applications. Most of these approaches assumed independence of instances, and designed sophisticated models to eliminate the sensitive information to facilitate fairness. However, recommender systems differ greatly from these approaches as users and items naturally form a user-item bipartite graph, and are collaboratively correlated in the graph structure. In this paper, we propose a novel graph based technique for ensuring fairness of any recommendation models. Here, the fairness requirements refer to not exposing sensitive feature set in the user modeling process. Specifically, given the original embeddings from any recommendation models, we learn a composition of filters that transform each user's and each item's original embeddings into a filtered embedding space based on the sensitive feature set. For each user, this transformation is achieved under the adversarial learning of a user-centric graph, in order to obfuscate each sensitive feature between both the filtered user embedding and the sub graph structures of this user. Finally, extensive experimental results clearly show the effectiveness of our proposed model for fair recommendation.
We publish the source code at https://github.com/newlei/FairGo.

\end{abstract}

\begin{CCSXML}
<ccs2012>
   <concept>
       <concept_id>10002951.10003227.10003351.10003269</concept_id>
       <concept_desc>Information systems~Collaborative filtering</concept_desc>
       <concept_significance>300</concept_significance>
       </concept>
   <concept>
       <concept_id>10003120.10003121.10003122.10003332</concept_id>
       <concept_desc>Human-centered computing~User models</concept_desc>
       <concept_significance>100</concept_significance>
       </concept>
 </ccs2012>
\end{CCSXML}

\ccsdesc[300]{Information systems~Collaborative filtering}
\ccsdesc[100]{Human-centered computing~User models}

\keywords{graph based recommendation, user modeling, fairness, fair Representation learning, fair Recommendation}

\maketitle

\section{Introduction}
\begin{small}
\begin{figure*} [htb]
    \centering 
    \subfloat[Data processing step]{           
        \includegraphics[width=0.4\linewidth,valign=t]{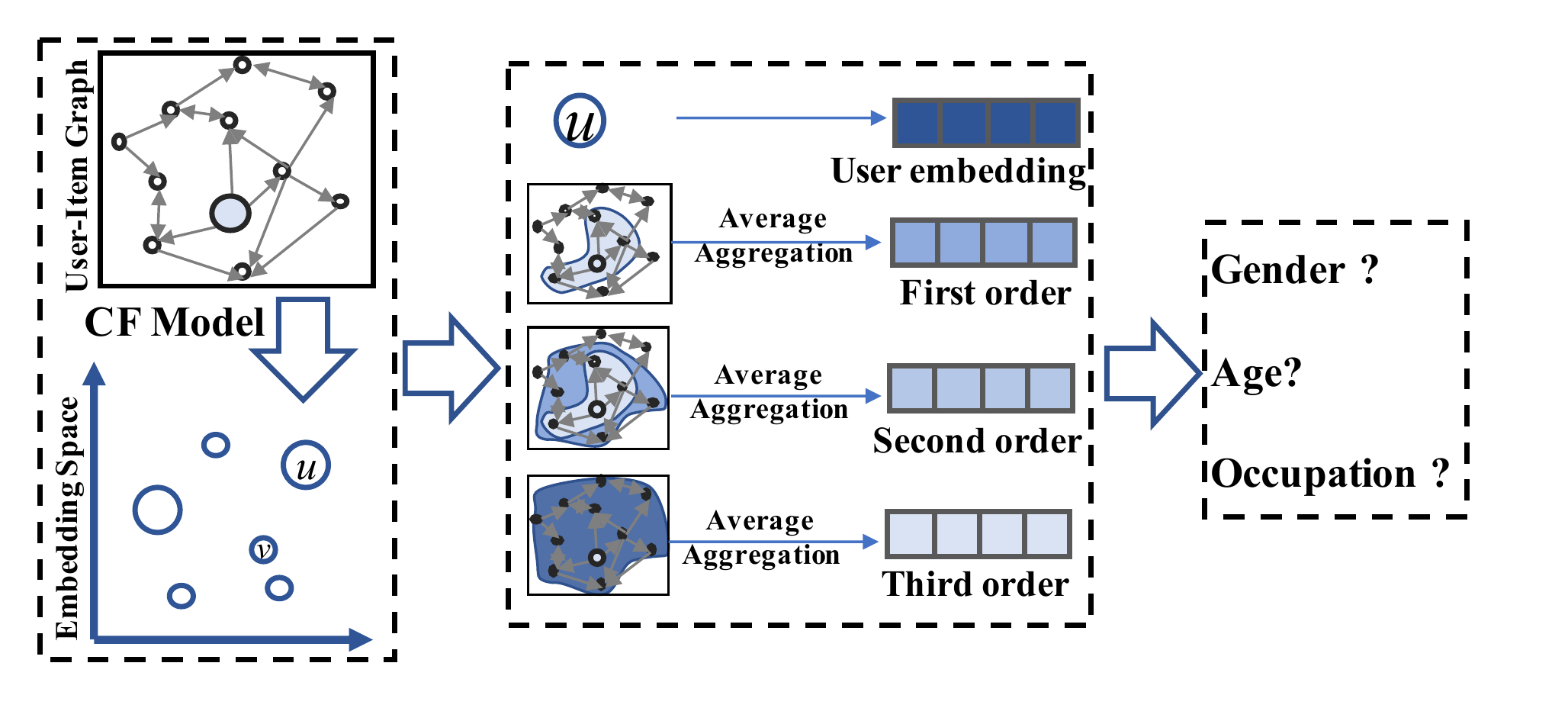}  
        \label{fig:intro_example}          

        }\ 
    \subfloat[Attribute prediction performance]{
        \adjustbox{width=0.5\linewidth,valign=t}{\begin{tabular}{|l|l|l|l|l|}
\hline
\multirow{2}{*}{Model} & \multirow{2}{*}{Input}& \multicolumn{3}{l|}{Sensitive attribute prediction performance} \\ \cline{3-5} 
& & Gender(AUC) & Age(F1) & Occupation(F1) \\ \hline
\multirow{4}{*}{PMF} & User embedding & 0.6615 & 0.3821 & 0.1332 \\ \cline{2-5} 
 & First order & 0.6181 & 0.3569 & 0.1407 \\ \cline{2-5} 
 & Second order & 0.5102 & 0.3431 & 0.1405 \\ \cline{2-5} 
 & Third order & 0.5004 & 0.3234 & 0.1289 \\ \hline
\multirow{4}{*}{GCN} & User embedding & 0.7041 & 0.4215 & 0.1485 \\ \cline{2-5} 
 & First order & 0.6804 & 0.3782 & 0.1474 \\ \cline{2-5} 
 & Second order & 0.5811 & 0.3509 & 0.1418 \\ \cline{2-5} 
 & Third order & 0.5129 & 0.3449 & 0.1296 \\ \hline
\end{tabular}
        \label{tab:primilary}}}
    \vspace{-0.2cm}
    \caption{Performance of two recommendation models~(PMF~\cite{mnih2008PMF} and GCN~\cite{AAAI2020revi}) for sensitive attribute prediction on MovieLens dataset. After learning user and item embeddings, we then extract the $l$-th user-centric subgraph embedding of each user. The learned $l$-th order embedding vector is treated as feature input for sensitive attribute prediction. We observe that each $l$-th order user-centric graph representation is helpful for attribute prediction. Details  can be found in the experiments.}
    \vspace{-0.3cm}
\end{figure*}
\end{small}

With the information explosion, recommender systems have been widely deployed in most platforms and have penetrated our daily life~\cite{recsys2016deep,KDD2018PinSage,koren2009MF,wu2020joint,AAAI2020revi}. These systems shape the news we consume, the movie we watch, the restaurant we choose, the job we seek and so on. While recommendation systems could better help users to find potentially interesting items, the recommendation results are also vulnerable to biases and unfairness. E.g.,  current recommendation results are empirically shown to favor a particular demographic group over others~\cite{ekstrand2018all,ekstrand2018exploring}. Career recommendation shows apparent gender-based discrimination even for equally qualified men and women~\cite{lambrecht2018algorithmic}. Ad recommendation results display racial biases between users with similar preferences~\cite{sweeney2013discrimination}. 

As biases in algorithms have been ubiquitous in these human centric  artificial intelligence applications, how to evaluate and improve algorithmic fairness to  benefit all users has become a hot research topic~\cite{pedreshi2008discrimination,NIPS2016equality}. Given a specific sensitive attribute, researchers have designed metrics for measuring fairness in supervised settings~\cite{dwork2012fairness,NIPS2016equality}. These metrics encourage the proportion  of sensitive attribute values in a protected group classified as positive is identical to that of the unprotected group~\cite{NIPS2016equality,ICLR2016censoring}.
Among all debiasing models, fair representation learning has become very popular
and widely studied due to the simplicity, generality and the advances of representation learning techniques~\cite{zemel2013learning,ICLR2016censoring,beutel2017data,nips2017controllable,bose2019compositional}.  These fair representation learning approaches learn data representations to maintain the main task while filtering any sensitive information hidden in the data representations. The fairness requirements are achieved by specific fairness regularization terms~\cite{zemel2013learning,yao2017beyond,cikm2018fairness}, or relied on adversarial learning techniques~\cite{goodfellow2014generative} that try to match the conditional distribution of representations given each sensitive attribute value to be identical~\cite{ICLR2016censoring,beutel2017data,nips2017controllable,bose2019compositional}.

In this paper, we focus on fair representation learning for fair recommendation, which tries to  eliminate sensitive information in the representation learning~\cite{zemel2013learning,beutel2017data,bose2019compositional}. Here, the fairness requirements refer to the fact that recommender systems do not expose any sensitive user attribute, such as gender, occupation.  In fact, state-of-the-art recommender systems rely on learning user and item embeddings for recommendation. E.g.,  the popular latent factor models learn free user and item embeddings for recommendation~\cite{mnih2008PMF,UAI2009BPR}. Recently, researchers argued that users and items naturally form a user-item bipartite graph structure~\cite{wu2020joint,wu2019neural}, and neural graph based models learn the user and item embeddings by injecting the graph structure in user and item embedding process, then receive state-of-the-art recommendation performance~\cite{KDD2018PinSage,AAAI2020revi}.
As learning user and item representations have become the key building block for modern recommender systems, we also  focus on learning fair user and item embeddings, such that the fair representation learning could be integrated into modern recommendation architecture.
In other words, the fair recommendation problem turns to learning fair user and item representations, such that any sensitive information could not be exposed from the learned embeddings.

In fact, even the user-item interaction behavior do not explicitly contain any user sensitive information, directly applying state-of-the-art user and item representation learning would lead to user sensitive information leakage, due to the widely supported correlation between user behavior and her attributes in social theories~\cite{kosinski2013private,wu2019neural,wu2017modeling}. E.g., a large scale study shows that users' private traits~(e.g., gender, political views) are predictable from their like behaviors from \emph{Facebook}. Therefore, a naive idea is  to borrow the current fairness-aware supervised machine learning techniques to ensure fairness on the user embeddings. This solution  alleviates unfairness of user representation learning to some extend. However, we argue that it is still far from satisfactory due to the uniqueness of the recommendation problem. Most fairness based machine learning tasks assume independence of entities, and eliminate unfairness of each entity independently without modeling the correlations with other entities. In recommender systems, users and items naturally form a user-item bipartite graph, and are collaboratively correlated in the systems. 
In these systems, each user's embedding is not only related to her own behavior, but also implicitly correlated with similar users' behaviors, or the user's behavior on similar items.  The collaborative correlation between users break the independence assumption in previous fairness based models, and is the foundation of collaborative filtering based recommendation. As such, even though a user's sensitive attributes are eliminated from her embedding, the user-centric structure may expose her sensitive attribute and lead to unfairness. To validate this assumption, we show an example of how a user's attribute can be inferred from the local graph structure of this user with state-of-the-art embedding models.  It can be observed from Figure~\ref{tab:primilary} that the attributes of users are not only exposed through her embedding, but also through surrounding neighbors' embeddings. This preliminary study empirically shows that each user's sensitive attributes are also related to the user-centric graph. As users and items form a graph structure, it is important to learn fair representations for recommendation from a graph based perspective.

To this end, in this paper, we propose a graph based perspective for fairness aware representation learning of any recommendation models. We argue that as the recommendation models are diversified and complicated in the real production environment, the proposed model should better be model-agnostic. By defining a sensitive feature set, our proposed model takes the user and item embeddings from any recommendation models as input, and learns a filter space such to obfuscate any sensitive information in the sensitive attribute set, while simultaneously maintains recommendation accuracy. Specifically, we learn a composition of each sensitive attribute filter that transforms each user's and item's original embeddings into a filtered embedding space. As each user can be represented as an ego-centric graph structure, the filters are learned under a graph based adversarial training process. Each discriminator tries to predict the corresponding attribute, and the filters are trained to eliminate any sensitive information that may be exposed by the user-centric graph structure. Finally, we perform extensive experimental results on two real-world datasets with varying sensitive information. The results clearly show the effectiveness of our proposed model for fair recommendation.

\section{Related Work}

\textbf {Recommendation Algorithms.} 
In a recommender system, there are two sets of entities:  a user set {\small$U$~($|U|\!=\!M$)}, and an item set {\small $V$~($|V|\!=\!N$)}.
Users interact with items to form a user-item interaction matrix $\Mat{R}\in \mathbb{R}^{M\times N}$. If user~$u$ has rated item~$v$, then ~$r_{uv}$ is the detailed rating value, otherwise
~$r_{uv}=0$. Naturally, we could formulate a user-item bipartite graph as
$\mathcal{G}=<{U}\cup {V}, \mathbf{A}>$, with $\mathbf{A}$ is formulated based on the rating matrix $\Mat{R}$ as:

\vspace{-0.2cm}
\begin{equation} \label{eq:cfmat2graph}
\mathbf{A}=\left [
    \begin{array}{c c}\Mat{R}\quad& \Mat{0}^{N\times M} \\
    \Mat{0}^{M\times N}\quad&\Mat{R^T} 	
    \end{array}
    \right].
\end{equation}

Learning high quality user and item embeddings has become the building block for successful recommender systems~\cite{mnih2008PMF,SIGIR2019NGCF,AAAI2020revi,wu2020joint}. Let {~\small$\Mat{E}\in \mathbb{R}^{D\times(M+N)}$} denote the embeddings of users and items learned by a recommendation $Enc$: $\Mat{E}=Enc(\mathcal{G})=[\mathbf{e}_1, ...,\mathbf{e}_u,..., \mathbf{e}_v,...\mathbf{e}_{M+N}]$.  After that, the predicted preference $\hat{r}_{uv}$ of user $u$ to item $v$ is calculated as the inner product between the corresponding user and item embeddings as: $\hat{r}_{uv}=\mathbf{e}^T_u\times \mathbf{e}_v$.

Currently, there are two classes of embedding approaches: the classical latent factor based models~\cite{mnih2008PMF,UAI2009BPR} and neural graph based models~\cite{SIGIR2019NGCF,AAAI2020revi}. Latent factor models adopt matrix factorization approaches to learn the free user and item ID embeddings. In contrast, the neural graph based models iteratively stack multiple graph convolution layers for node embedding in this user-item graph.  At each iteration $l+1$, each node's embedding at this layer is a convolution neighborhood aggregation by neighborhood's embeddings at layer $l$. Empirically, these neural graph based models show better performance by injecting the collaborative signal hidden in the graph for user and item enbeddubg learning~\cite{SIGIR2019NGCF,AAAI2020revi}.

\textbf{Algorithmic Fairness and Applications.}
As machine learning and data mining are widely applied for knowledge discovery to guide automated decision making, there is much interest in discovering, measuring and ensuring fairness~\cite{pedreshi2008discrimination,icml2018learning,kdd2019pairwisefairness}. Among all fairness metrics, group fairness is widely used to measure how the underrepresented group is treated in this process~\cite{NIPS2016equality}. 
Current solutions for fairness requirements can be classified into causal based approaches~\cite{nips2017counterfactual,khademi2019fairness}, ranking based models~\cite{kdd2019pairwisefairness}, and fair representation learning models~\cite{zemel2013learning,ICLR2016censoring,icml2018learning,bose2019compositional}. In this paper, we focus on fair representation learning due to its generality  and the recent rapid progress of representation learning techniques. Fair representation learning models either added fairness-based regularization terms~\cite{zemel2013learning,cikm2018fairness,yao2017beyond} in the objective function or relied on the adversarial learning models to ensure group fairness~\cite{icml2018learning,bose2019compositional}. 
Borrowing the success of GANs~\cite{goodfellow2014generative}, adversarial fair representation models have a feature learning module and an additional discriminator to guess the sensitive information. These two parts play a minimax game, and adversarial upper bounds on group fairness metrics can be achieved~\cite{icml2018learning}. Compared to the manually defined fairness regularization terms, adversarial training for fairness shows the theoretical elegance and the learned representations can be transferred for many downstream tasks. Most of the current fairness representation learning focused on binary supervised tasks. A recent work tackled the problem of learning fair representation learning from graph~\cite{bose2019compositional}. This approach advanced previous works with state-of-the-art graph embedding based representation learning models, and a composition of discriminators for modeling the correlation of sensitive features~\cite{bose2019compositional}. However, the graph structure is only utilized for accurate node embedding learning, and the fairness is still achieved by independently filtering out each node's sensitive information. 
E.g., in recommender systems with user-item bipartitie graph, this model may lead to unfairness as users' sensitive information 
is exposed by the items they like.

\textbf{Recommendation Fairness.}
In recommender systems, researchers observed popularity and demographic disparity of the current user-centric applications and recommender systems, with different demographic groups obtain different utility from the recommender systems~\cite{ekstrand2018all,ekstrand2018exploring,lambrecht2018algorithmic}. Researchers empirically showed that, the post-processing technique that improves recommendation diversity would amplify user unfairness~\cite{WWW2018user}.
Researchers proposed four new metrics for collaborative filtering based recommendation with a binary sensitive attribute, in order to measure the discrepancy between the prediction behavior for disadvantages users and advantaged users. Theses fairness metrics are treated as fairness regularization terms for group fairness in recommendation~\cite{yao2017beyond}. A fairness aware tensor based recommendation is proposed by isolating sensitive attributes in the latent factor matrix, and the remaining features are regularized to keep away from sensitive attributes~\cite{cikm2018fairness}. Instead of directly debiasing results in the model learning process, re-ranking models are also applied in search and recommendation systems, with the well designed fairness metrics to guide the learning to process to mitigate the disparity~\cite{kdd2019pairwisefairness,kdd2019fairnessreranking}.  Some studies tried to find the casual effect or design explainable models for users' behaviors, in order to  ensure fairness grounded on causal effect or explainable components~\cite{khademi2019fairness}.  We  differ from these recommendation fairness models as we argue that users naturally form a user-item bipartite graph structure, and users' sensitive information can be exposed from her local graph structure. Therefore,
we consider the fairness issue from a graph based perspective.

\section{The Proposed \shortname Model}

Most recommender systems are based on embedding based models, and can be very complex and time-consuming due to the large volume of users and heterogeneous data~\cite{KDD2018PinSage,recsys2016deep}. Therefore, the user and item embedding learning process are performed offline, and it is nearly impossible to retrain the embedding models from time to time. We attempt to design a model that takes user and item embeddings from any recommendation model as input, i.e., {\small$\mathbf{E}$}, and our goal is to achieve a model-agnostic based fair representation learning in the filtered space. Here, the fairness requirements refer to a protected or sensitive user attribute set $\mathbf{X}\in\mathbb{R}^{K*M}$ with $K$ sensitive attributes~(e.g., gender and age), which are not encouraged to help recommendation. In the following, we introduce our proposed \fullname~(FairGo) model for fairness requirements in recommender systems, followed by the theoretical analysis.

\begin{small}
\begin{figure*} [htb]
  \begin{center}
    \includegraphics[width=140mm]{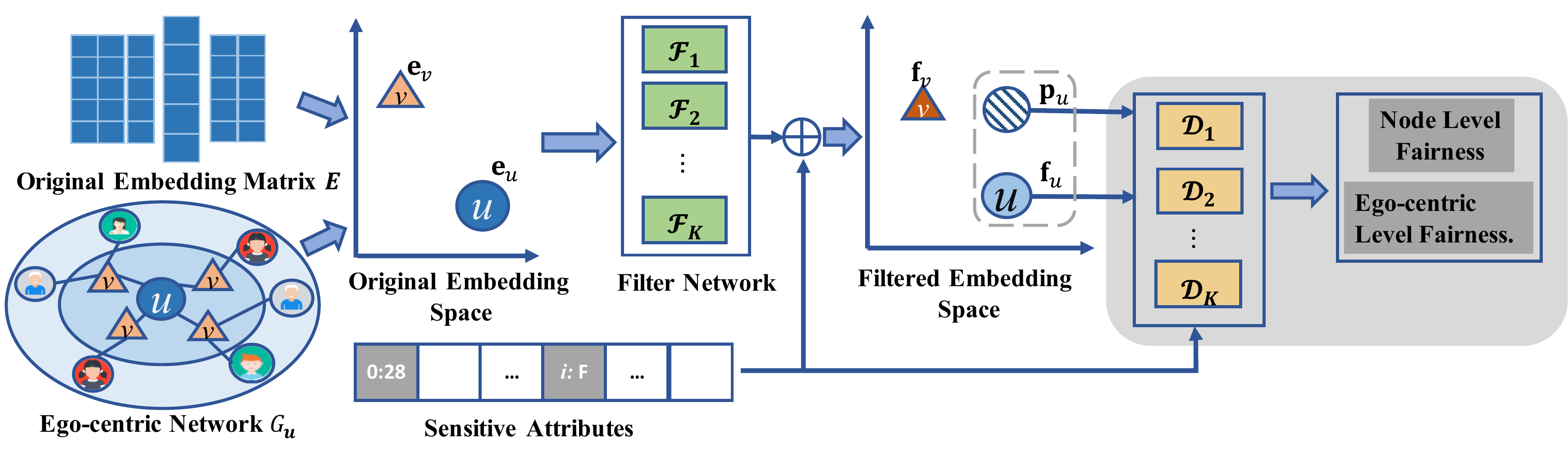}
  \end{center} 
  \caption{The overall structure of our proposed \shortname model.}\label{fig:overall structure} 
\end{figure*}
\end{small}

\subsection{Overall Architecture}

Given the original embedding matrix {\small$\mathbf{E}$} and the sensitive attributes {\small$\mathbf{X}$}, \shortname designs a combination of $K$ sub filters as the filter structure $\mathcal{F}$ to remove information about the user protected attributes $\mathbf{X}$, such that each node~(user node or item node) is filtered from the original embedding space {\small$\mathbf{E}$} to a filtered embedding space  as: {\small$\mathbf{F}=\mathcal{F}(G,\mathbf{E},\mathbf{X})$}, with $\mathbf{F}=[\mathbf{F}_U, \mathbf{F}_V]\in \mathbb{R}^{D\times (M+N)}$ .  As there are $K$ sensitive attributes,  the filter network $\mathcal{F}$ is composed of $K$ sub filters as: {\small$\mathcal{F}=[\mathcal{F}^k]_{k=1}^{K}$}, with each sensitive attribute $k$ is associated with a sub filter $\mathcal{F}^k$.
Then,  each entity~(user or item) is filtered and represented in the filtered embedding space as: 

\vspace{-0.2cm}
\begin{small}
\begin{equation} \label{eq:filter_pred}
\mathbf{f}_{i}=\frac{\sum_{k=1}^{K} \mathcal{F}^k(\mathbf{e}_i)}{K}.
\end{equation}
\end{small}
\vspace{-0.2cm}

Given the filtered embedding space, the predicted preference $\hat{r}_{uv}$ of user $u$ to item $v$ is calculated as:
\begin{small}
\begin{equation}\label{eq:filter_pred_r}
\hat{r}_{uv}=\mathbf{f}^T_u\times\mathbf{f}_v.
\end{equation}
\end{small}
\vspace{-0.3cm}

With the overall filter network structure to filter original embeddings in a filter space, we argue that the fairness-aware recommender systems need to satisfy two goals: \emph{representative} for users' personalized preferences while \emph{fair} for the sensitive attributes. On one hand, the filtered embeddings should be representative of 
users' preferences to facilitate recommendation accuracy. On the other hand, these filtered embeddings should be fair and do not leak any information that correlates to each user's personalized sensitive information.  

In this paper, we adopt adversary training techniques to achieve fairness. Specifically, given the filtered networks $[\mathcal{F}^k]_{k=1}^K$, there are $K$ \emph{d}iscriminator sub networks. By taking  the filtered embedding~$\mathbf{f}_u$ as input, the $k$-th sub discriminator attempts to predict the value of the $k$-th sensitive attribute.
In other words, each sub discriminator $\mathcal{D}^k$ works as a classifier to guess the $k$-th attribute. 
The filter network and the discriminator network play the following two-player minimax game with the following value function $V(\mathcal{F},\mathcal{D})$:

\vspace{-0.3cm}
\begin{flalign} \label{eq:original_likelihood}
&\arg\underset{\mathcal{F}}{\max }~\arg\underset{\mathcal{D}}{\min}V (\mathcal{F},\mathcal{D})=V_{R}-\lambda V_{G} \\\nonumber
&=\mathbb{E}_{(u,v,r,x_u)\sim p(\mathbf{E},\mathbf{R},\mathbf{X})}
[\ln q_{\mathcal{R}}(r|(\mathbf{f}_u, \mathbf{f}_v)=\mathcal{F}(\mathbf{e}_u,\mathbf{e}_v))-\\
&\quad \lambda \ln q_{\mathcal{D}}(x|(\mathbf{f}_u, \mathbf{p}_u)=\mathcal{F}(\mathbf{e}_u,\mathbf{e}_v))],
\end{flalign}

\noindent where $V_{R}$ is the log likelihood of the rating distribution and $V_G$ is the log likelihood of the predicted attribute distribution. $\lambda$ is a balance parameter that balances these two value functions. When $\lambda$ equals zero, the fairness requirements disappear.

For the rating distribution, we assume it follows a Gaussian distribution, with the mean of the Gaussian distribution is the predicted rating as shown in Eq.\eqref{eq:filter_pred_r}. Therefore, the value function of 
$V_R$ is modeled as:  

\vspace{-0.2cm}
\begin{small}
\begin{equation} \label{eq:loss_r2}
V_{R}=-\sum_{u=1}^M\sum_{v=1}^V (r_{uv}-\hat{r}_{uv})^2,
\end{equation}
\end{small}
\vspace{-0.2cm}

\noindent  where the precision parameter in the Gaussian distribution is omitted as we can perform a reweight trick by tuning the balance parameter $\lambda$ of these two tasks.

\subsection{Graph based  Adversarial Learning for Fairness Modeling}
Given the sensitive attribute vector $\mathbf{x}_i$, a naive idea is to design the value function based on the current node's embedding as: 
\vspace{-0.2cm}
\begin{small}
\begin{equation} \label{eq:f1_loss}
V_{N}= \mathbb{E}_{(u,v,r,x_u)} \sum_{k=1}^{K } x_{uk}ln \mathcal{D}^k(\mathbf{f}_u).
\end{equation}
\end{small}
\vspace{-0.2cm}

In fact, the above value function only considers the fairness in the filtered embedding space with independence assumption of users. In recommender systems, users and items form a user-item bipartite graph. For each user $u$,  we use $G_u$ to denote the ego-centric network of user $u$ in the user-item graph $G$.  Specifically, the ego-centric network $G_u$ takes $u$ as the central node, and is a 
local neighborhood network that spans from $u$. With the ego-centric network $G_u$ of $u$,  the goal towards fairness requirements is that, $u$'s sensitive attribute is not exposed by her local network $G_u$.

The above Eq.\eqref{eq:f1_loss} simplifies the user-centric graph $G_u$ as a filtered node level representation, i.e., $\mathbf{f}_u$. Nevertheless, the independence assumption among users is not well supported in the user-item bipartite graph. In fact, the collaborative correlations between users are the foundation for building recommender systems. In the user-item  bipartite graph $G$, users are correlated through items in this graph structure. Trivial global representation $\mathbf{f}_u$ of each user $u$ may not well capture the local graph structure of this user. Therefore, given the filtered node embedding space, we also seek to obtain an ego-centric graph based structure representation of each user $u$ as:

\vspace{-0.3cm}
\begin{small}
 \begin{equation} \label{eq:graph_summary}
 \mathbf{p}_u=\mathcal{P}(G_u,\mathbf{F})=\mathcal{P}(G_u,\mathcal{F}(G,\mathbf{E},\mathbf{X}))),
 \end{equation}
\end{small}
\vspace{-0.2cm}

\noindent where  $\mathcal{P}$ is a structure representation function of the local graph summary of a user, and $\mathbf{p}_u$ is the output of the patch network that  summarizes user $u$ from her ego-centric graph structure $G_u$.E.g., it can be an aggregation of a user's up to $L$-th order neighborhood representation, or can be implemented with state-of-the-art sophisticated graph representation learning models~\cite{ying2018hierarchical}.

Similar as Eq.\eqref{eq:f1_loss}, given the local graph structure summary $\mathbf{p}_u$ of each user $u$, we also employ adversarial training to ensure each user's sensitive attributes are not exposed by local graph structure:

\vspace{-0.2cm}
\begin{small}
\begin{equation} \label{eq:f1_loss2}
V_{S}= \mathbb{E}_{(u,v,r,x_u)} \sum_{k=1}^{K } x_{uk} ln \mathcal{D}^k(\mathbf{p}_{u}).
\end{equation}
\end{small}
\vspace{-0.2cm}

As such, the fairness requirement is defined under the graph based adversarial learning process, with each user's ego-centric network structure is summarized as both the node-level based value function in as shown in  Eq.\eqref{eq:f1_loss} and the graph structure level function in Eq.\eqref{eq:f1_loss2}.  Then, the fairness based value function  $V_{G}$ is a combination of these two parts as: $V_{G}=V_{N}+V_{S}$, where the first part captures the node-level fairness, and the second part models the ego-centric fairness.

\subsubsection{Summary Network for Ego-centric Graph}
Now we focus on how to model ego-centric fairness based value function $V_{S}$. In other words, we need to build a summary network $\mathbf{p}_u$ for better ego-centric representation.

\textbf{Weighted Average Pooling.}
A simple yet effective implementation of the ego-centric graph summary structure as:

\vspace{-0.2cm}
\begin{small}
 \begin{equation}\label{eq:s_first_nei_mean}
 \mathbf{p}_u=\mathcal{P}(G_u,\mathbf{F})=\frac{\sum_{v\in A_u} r_{uv} \mathbf{f}_v}{\sum_{v\in A_u}  r_{uv}},
 \end{equation}
\end{small}
\vspace{-0.1cm}
\noindent where $\mathbf{p}_u$ is the average filter embedding of local first order neighbors of user $u$ given the graph $G_u$.

The above pooling technique is adopted for summarizing first order user-centric network, i.e., the direct connected neighbors of user $u$. For modeling the up to $L$-th higher order user-centric network,  we extend Eq.\eqref{eq:s_first_nei_mean} to aggregate the up to $L$-th order ego-centric graph structure as:

\vspace{-0.3cm}
\begin{small}
 \begin{flalign}
 \mathbf{h}^1_i&=\frac{\sum_{j\in A_{i}} (a_{ij} \mathbf{f}_j)}{\sum_{j\in A_{i}}  a_{ij}}, \quad
\forall l\geq 2, \mathbf{h}^l_i=\frac{\sum_{j\in A_{i}} (a_{ij}\mathbf{h}^{l-1}_j)}{\sum_{j\in A_{i}}  a_{ij}} \label{eq:s_k_nei_con} \\
 \mathbf{p}_u&=\frac{1}{L}\sum_{l=1}^L \mathbf{h}^l_u
 \end{flalign}
\end{small},
\vspace{-0.2cm}

\noindent where $a_{ij}$ is an edge weight in edge weight matrix $\mathbf{A}$~(Eq.\eqref{eq:cfmat2graph}). $A_i$ is the subset that directly connects to node $i$ in this matrix.  Eq.\eqref{eq:s_k_nei_con} calculates each node's $l$-th order ego-centric graph representation, and Eq.\eqref{eq:s_k_nei_con} averages each layer's representation as the ego-centric representation.

However, this simple average aggregation fails, as it does not account for the different higher graph structure in the modeling process. As illustrated in Figure~\ref{tab:primilary}, as $l$ increases, each $l$-th order neighbor becomes more distant from the current user,  and the ability of distant neighbors to expose this user's sensitive information becomes smaller compared to the closer neighbors.

\textbf{Local Value Aggregation}. For each user $u$, instead of directly modeling the up-to-K-th order subgraph representation, we argue that her sensitive attribute is better not exposed by any $l$-th layer user-centric graph structure representation $\mathbf{h}^l_{u}$. Let $V^l$ denote the value function of the $l$-th subgraph structure, we have the following value function:

\vspace{-0.4cm}
\begin{small}
\begin{equation} \label{eq:s_value}
V^l_{S}= \mathbb{E}_{(u,v,r,x_u)} \sum_{k=1}^{K } x_{uk} ln \mathcal{D}^k(\mathbf{h}^l_{u}).
\end{equation}
\end{small}
\vspace{-0.2cm}

After that, the subgraph based value function in Eq.\eqref{eq:f1_loss2}
is a combination of the up to $L$-th order value function:l
\vspace{-0.2cm}
\begin{small}
\begin{equation}\label{eq:s_value_agg}
V_{S}=\lambda_1 V^1_{S}+...+\lambda_l V^l_{S}+...+\lambda_L V^L_{S}=\sum_{l=1}^L \lambda_l V^l_{S},
\end{equation}
\end{small}
where $\lambda_l$ is a balance parameter that needs to be tuned to balance different $l$-th order value function. The larger the $\lambda_l$, the more important this $l$-th order value function.

\textbf{Learning based Aggregation.} The above local value aggregation function  needs to involve human labor to manually tune the balance parameter $\lambda_l$. For each user $u$, we propose to directly learn the ego-centric representation $p_u$ with each $l$-th layer representation $\mathbf{h}^l_u$. We propose to adopt deep neural networks to learn the sub-graph representation. We use a Multilayer Perceptron (MLP) to model the non-linear aggregation of all layers for sub graph representation, as MLPs are powerful to approximate any universal complex functions ~\cite{goodfellow2016deep}. Specifically, the learning based aggregation involves an MLP to learn the final ego-centric graph embedding as: 

\vspace{-0.2cm}
\begin{small}
 \begin{equation}\label{eq:s_learning_agg}
 \mathbf{p}_u=MLP(\mathbf{h}^1_u, \mathbf{h}^2_u, ...,\mathbf{h}^L_u),
 \end{equation}
\end{small}
\vspace{-0.2cm}

\noindent where the learnable parameters are the parameters in the MLP structure, which can be learned with other parameters in a unified training procedure.

Please note that, someone may argue that there are advanced graph embedding models with carefully designed architecture for learning the ego-centric graph representation. As the focus of this paper is not to design more sophisticated graph embedding models, we use a simple yet effective summary network for ego-centric graph representation, and focus on whether modeling the graph structure is effective for fair representation learning.

\section{Theoretical Analysis}
In this section, we theoretically analyze the implications of our proposed model. 

Specifically, in the supplementary material, we show that the overall value function in Eq.\eqref{eq:original_likelihood} can be  seen as independent combinations of each sub discriminator $\mathcal{D}^k$ with attribute $k$. Without loss of generality, we consider the overall value function with regard to the $k$-th attribute is:

\vspace{-0.4cm}
\begin{flalign} \label{eq:original_dis}
V(\mathcal{F},\mathcal{D}^k)=&\mathop{\mathbb{E}}\limits_{(u,v,r,x)\sim p(\mathbf{E},\mathbf{R},\mathbf{X})}
[\ln q_{\mathcal{R}}(r|\mathcal{F}(G_u,\mathbf{E},\mathbf{X}))-\\ \nonumber
&\lambda K \ln q_{\mathcal{D}^k}(x_{uk}|\mathcal{F}(G_u,\mathbf{E},\mathbf{X}))].
\end{flalign}

\noindent

Since both the rating prediction part and the discriminator rely on the filtered embeddings $\mathbf{F}=\mathcal{F}(G,\mathbf{E},\mathbf{X})$ that is directly filtered from the original embeddings $\mathbf{E}$ from any recommendation model, we define an alternative distribution over the filtered embedding space $\mathbf{F}$ as follows:
\vspace{-0.2cm}
\begin{footnotesize}
\begin{flalign} \label{eq:filter_dis}  
 \hat{p}(\mathbf{f}_u,\mathbf{f}_v,\mathbf{p}_u,r,x) &=\int_{\mathbf{e}_u,\mathbf{e}_v} \hat{p}(\mathbf{e}_u, \mathbf{e}_v,\mathbf{f}_u,\mathbf{f}_v,\mathbf{p}_u,r,x)d(\mathbf{e}_u, \mathbf{e}_v) \nonumber\\
 &=\int_{\mathbf{e}_u,\mathbf{e}_v} p(\mathbf{e}_u, \mathbf{e}_v,r,x)p_{\mathcal{F}}(\mathbf{f}_u,\mathbf{f}_v,\mathbf{p}_u|\mathbf{e}_u, \mathbf{e}_v)d(\mathbf{e}_u, \mathbf{e}_v)   \nonumber \\
 &=\int_{\mathbf{e}_u,\mathbf{e}_v} p(\mathbf{e}_u,\mathbf{e}_v,r,x)\delta(\mathcal{F}(G_u,\mathbf{E},\mathbf{X})=(\mathbf{f}_u,\mathbf{f}_v,\mathbf{p}_u))d(\mathbf{e}_u,\mathbf{e}_v).
\end{flalign}
\end{footnotesize}
\vspace{-0.2cm}

With the alternative distribution that relies on the filtered embedding space in Eq.\eqref{eq:filter_dis}, we replace Eq.\eqref{eq:original_dis} to:

\vspace{-0.2cm}
\begin{flalign}\label{eq:filter_rewrite}
V(\mathcal{F},\mathcal{D}^k)=&\mathop{\mathbb{E}}\limits_{(\mathbf{f}_u,\mathbf{f}_v,\mathbf{p}_u,r,x)\sim \hat{p}(\mathbf{f}_u,\mathbf{f}_v,\mathbf{p}_u,r,x)}[\ln q_{\mathcal{R}}(r|\mathcal{F}(G_u,\mathbf{E},\mathbf{X}))
-\\ \nonumber
&\lambda K \ln q_{\mathcal{D}^k} (x_{uk}|\mathcal{F}(G_u,\mathbf{E},\mathbf{X}))]. 
\end{flalign}
\vspace{-0.2cm}
After that, we have the following propositions.

\begin{lemma} If the discriminator network has enough capacity, the optimal solution of $q^{*}_{\mathcal{D}^k}$ is $\hat{p}(x_{uk}|\mathbf{f}_u,\mathbf{p}_u)$.
\end{lemma}

\begin{proof}  
Given the equality constraints of the predicted probability distribution $\sum_{x}q_{\mathcal{D}^k}(x_{uk}|(\mathbf{f}_u, \mathbf{p}_u))=1$, we can obtain the Lagrangian dual optimization problem, and solve it. We show the details of this proof in the supplementary material.
\end{proof}

\begin{lemma}
When $\lambda\rightarrow \infty$, if both the filter $\mathcal{F}$ and the discriminator $\mathcal{D}$ have enough capacity, and at each step the discriminator and filter are allowed to reach their optimal values. Then, the filtered embedding space is conditionally independent with the sensitive attributes.
\end{lemma}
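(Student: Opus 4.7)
The plan is to substitute the optimal discriminator provided by Lemma~1 into the value function $V(\mathcal{F},\mathcal{D}^k)$, take the $\lambda\to\infty$ limit so that only the adversarial penalty governs the outer maximization, and then rewrite the surviving term as a mutual information that the filter is forced to drive to zero. Concretely, I first invoke Lemma~1 to plug $q^{*}_{\mathcal{D}^k}(x_{uk}\mid \mathbf{f}_u,\mathbf{p}_u)=\hat{p}(x_{uk}\mid \mathbf{f}_u,\mathbf{p}_u)$ into Eq.~\eqref{eq:filter_rewrite}, which collapses the discriminator term into
\[
V(\mathcal{F},\mathcal{D}^{k*}) \;=\; \mathbb{E}_{\hat{p}}\!\left[\ln q_{\mathcal{R}}(r\mid \mathcal{F}(G_u,\mathbf{E},\mathbf{X}))\right] \;-\; \lambda K\,\mathbb{E}_{\hat{p}}\!\left[\ln \hat{p}(x_{uk}\mid \mathbf{f}_u,\mathbf{p}_u)\right].
\]

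Next, I take $\lambda\to\infty$. Because the rating log-likelihood term is bounded independently of $\lambda$, in the limit the outer $\arg\max_{\mathcal{F}}$ is dictated entirely by the second term, i.e.\ the filter must maximize $-\mathbb{E}_{\hat{p}}[\ln \hat{p}(x_{uk}\mid \mathbf{f}_u,\mathbf{p}_u)] = H(x_{uk}\mid \mathbf{f}_u,\mathbf{p}_u)$, the conditional entropy of the $k$-th sensitive attribute given the filtered embeddings. Using the identity $H(x_{uk}\mid \mathbf{f}_u,\mathbf{p}_u)=H(x_{uk})-I(x_{uk};\mathbf{f}_u,\mathbf{p}_u)$ and the fact that $H(x_{uk})$ is fixed by the data marginal (independent of $\mathcal{F}$), this is equivalent to minimizing $I(x_{uk};\mathbf{f}_u,\mathbf{p}_u)$. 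Since mutual information is non-negative with equality iff the two variables are independent, and since by hypothesis $\mathcal{F}$ has enough capacity to realize any push-forward of $\mathbf{E}$ through the graph, the optimal filter attains $I(x_{uk};\mathbf{f}_u,\mathbf{p}_u)=0$, so $x_{uk}\perp (\mathbf{f}_u,\mathbf{p}_u)$.

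Finally, because the full objective in Eq.~\eqref{eq:original_likelihood} decomposes across the $K$ sub-discriminators (as already used when passing to Eq.~\eqref{eq:original_dis}), the same argument applies uniformly in $k$ and all $K$ mutual informations vanish simultaneously at the saddle point, which yields the claimed independence between the filtered embedding space and the full sensitive attribute vector $\mathbf{x}_u$. The main obstacle I expect is the capacity clause: to justify exchanging the limit $\lambda\to\infty$ with the outer supremum, I must argue that the filter class actually realizes a distribution achieving $I=0$ (not merely approximates it), and that the rating term does not blow up along any such minimizing sequence. A second, more subtle point is that simultaneous vanishing of the $K$ marginal mutual informations $I(x_{uk};\mathbf{f}_u,\mathbf{p}_u)$ only yields joint independence of $(\mathbf{f}_u,\mathbf{p}_u)$ from the full vector $\mathbf{x}_u$ under an additional assumption (e.g.\ that the adversarial sum effectively controls $I(\mathbf{x}_u;\mathbf{f}_u,\mathbf{p}_u)$ via a union-bound-type inequality), and I would need to make that step explicit rather than treat it as automatic.
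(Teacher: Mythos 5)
Your proposal is correct in substance, but it takes a genuinely different and far more explicit route than the paper. The paper's own proof of this lemma is essentially a two-line deferral: it observes that $V_R$ vanishes from the objective as $\lambda\rightarrow\infty$ and then cites the optimality argument in Section~4 of the original GAN paper, leaving the reader to adapt the Jensen--Shannon-divergence argument to this conditional-distribution setting. You instead substitute the optimal discriminator $q^{*}_{\mathcal{D}^k}=\hat{p}(x_{uk}\mid\mathbf{f}_u,\mathbf{p}_u)$ from Lemma~1, identify the surviving adversarial term as the conditional entropy $H(x_{uk}\mid\mathbf{f}_u,\mathbf{p}_u)$, and use $H(x_{uk}\mid\mathbf{f}_u,\mathbf{p}_u)=H(x_{uk})-I(x_{uk};\mathbf{f}_u,\mathbf{p}_u)$ together with nonnegativity of mutual information to conclude independence at the optimum. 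This is in effect a specialization of the paper's own Theorem~3 (proved only later) to the $\lambda\rightarrow\infty$ limit, so your argument is self-contained where the paper's is not, and it makes transparent exactly which quantity is being driven to zero. Your two flagged caveats are real and are not addressed by the paper either: the exchange of the $\lambda\rightarrow\infty$ limit with the outer supremum needs the observation that a fairness-optimal filter (e.g.\ a constant map) keeps the rating term finite, which you essentially supply; and the passage from $I(x_{uk};\mathbf{f}_u,\mathbf{p}_u)=0$ for each $k$ to independence from the full vector $\mathbf{x}_u$ does not follow from marginal independence alone, so the lemma as stated is only literally established attribute-by-attribute --- a gap present in the paper's version as well, and one you are right to make explicit rather than paper over.
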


\begin{proof}
In fact, when $\lambda\rightarrow \infty$, $V_R$ disappears in Eq.\eqref{eq:original_likelihood}. And the above proposition could be easily validated if we check the proofs in Section 4 of the original GAN model ~\cite{goodfellow2014generative}.
\end{proof}

However, the above proposition is too strict as when $\lambda\rightarrow \infty$, the rating prediction objective $V_R$ is discarded in the proposed model. In the following, we do not give restrictions on $\lambda$, and give a detailed analysis of the objective function in Eq.\eqref{eq:original_likelihood}.

\begin{theorem}
Given enough capacity of the discriminator network, the objective function in Eq.\eqref{eq:filter_rewrite} is equivalent to  $\min_{\mathcal{F}} H(\mathbf{R}|\mathbf{F})-\lambda K H(\mathbf{X}_k|(\mathbf{F},\mathbf{P}))$, i.e.,
minimizing the conditional entropy between the ratings and filtered embeddings, while maximizing the conditional entropy between the sensitive attribute and the filtered embeddings.
\end{theorem}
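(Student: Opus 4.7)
The plan is to substitute the optimal inner solutions into the value function in Eq.~\eqref{eq:filter_rewrite} and recognize the resulting expectations as (negative) conditional entropies under the alternative distribution $\hat{p}$ defined in Eq.~\eqref{eq:filter_dis}, at which point the outer maximization over $\mathcal{F}$ is immediately equivalent to the stated minimization.

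First I would invoke Lemma~1 directly: under the ``enough capacity'' assumption, the inner minimization over $\mathcal{D}^k$ is achieved at $q^{\star}_{\mathcal{D}^k}(x_{uk}\mid \mathbf{f}_u,\mathbf{p}_u) = \hat{p}(x_{uk}\mid \mathbf{f}_u,\mathbf{p}_u)$. Plugging this back into the attribute term of Eq.~\eqref{eq:filter_rewrite} and using the identity $H(\mathbf{X}_k\mid(\mathbf{F},\mathbf{P})) = -\mathbb{E}_{\hat{p}}[\ln \hat{p}(x_{uk}\mid \mathbf{f}_u,\mathbf{p}_u)]$ rewrites that summand as $+\lambda K\, H(\mathbf{X}_k\mid(\mathbf{F},\mathbf{P}))$. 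A parallel Gibbs'-inequality argument applied pointwise to the rating head shows that the optimal $q^{\star}_{\mathcal{R}}(r\mid \mathbf{f}_u,\mathbf{f}_v)$ equals $\hat{p}(r\mid \mathbf{f}_u,\mathbf{f}_v)$, so the first summand collapses to $-H(\mathbf{R}\mid \mathbf{F})$.

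Combining the two substitutions, the inner-optimized value function becomes
\begin{equation*}
V^{\star}(\mathcal{F}) \;=\; -H(\mathbf{R}\mid \mathbf{F}) \;+\; \lambda K\, H(\mathbf{X}_k\mid(\mathbf{F},\mathbf{P})),
\end{equation*}
so that $\arg\max_{\mathcal{F}} V^{\star}(\mathcal{F})$ coincides with $\arg\min_{\mathcal{F}}\bigl[H(\mathbf{R}\mid \mathbf{F}) - \lambda K\, H(\mathbf{X}_k\mid(\mathbf{F},\mathbf{P}))\bigr]$, which is exactly the stated equivalence.

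The main obstacle will be justifying the rating-head substitution cleanly, since unlike the adversarial discriminator, $q_{\mathcal{R}}$ in the main text is a fixed Gaussian with mean $\mathbf{f}_u^{\top}\mathbf{f}_v$ (Eq.~\eqref{eq:filter_pred_r}). To legitimately identify $\mathbb{E}[\ln q_{\mathcal{R}}]$ with $-H(\mathbf{R}\mid\mathbf{F})$, I would either extend the ``enough capacity'' hypothesis so that the Gaussian family, together with the expressivity of $\mathcal{F}$, can realize $\hat{p}(r\mid \mathbf{f}_u,\mathbf{f}_v)$, or else absorb the filter-independent gap between cross-entropy and entropy into an additive constant that does not affect the $\arg\max$ over $\mathcal{F}$. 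Once that point is settled, the remainder of the argument is a direct application of Lemma~1 together with the definition of conditional entropy under the alternative distribution $\hat{p}$ from Eq.~\eqref{eq:filter_dis}.
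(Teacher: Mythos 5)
Your proposal follows essentially the same route as the paper: substitute the optimal discriminator $q^{\star}_{\mathcal{D}^k}=\hat{p}(x_{uk}\mid\mathbf{f}_u,\mathbf{p}_u)$ from Lemma~1, read the resulting expectations under $\hat{p}$ as conditional entropies, and convert the outer $\max_{\mathcal{F}}$ into the stated $\min_{\mathcal{F}}$. The only difference is that you explicitly flag the rating-head step --- the paper silently equates $\mathbb{E}[\ln q_{\mathcal{R}}(r\mid\mathbf{f}_u,\mathbf{f}_v)]$ with $-H(\mathbf{R}\mid\mathbf{F})$ even though $q_{\mathcal{R}}$ is a fixed Gaussian (so this is really a cross-entropy, and the KL gap is not obviously $\mathcal{F}$-independent) --- which makes your write-up, if anything, more careful than the published one.
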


\begin{proof}
By replacing the best discriminator in Lemma 1,  the objective goal in Eq. \eqref{eq:filter_rewrite} is equal to:
\vspace{-0.2cm}
\begin{flalign}
\arg\min_{\mathcal{F}}V(\mathcal{F},\mathcal{D}^{k^*}) &=\mathop{\mathbb{E}}\limits_{(\mathbf{f}_u,\mathbf{f}_v,\mathbf{p}_u,r,x)\sim\hat{p}(\mathbf{f}_u,\mathbf{f}_v,\mathbf{p}_u,r,x)} [\ln q_R (r|(\mathbf{f}_u,\mathbf{f}_v))
-\\ \nonumber
&\quad \lambda K \ln \hat{p} (x_{uk}|(\mathbf{f}_u,\mathbf{p}_u))] \nonumber  \\ 
&= -H(\mathbf{R}|\mathbf{F})+\lambda K H(\mathbf{X}_k|(\mathbf{F},\mathbf{P})). \nonumber \\ 
\end{flalign}
\end{proof}
\vspace{-0.2cm}

By combining Eq.\eqref{eq:original_likelihood} and Theorem 3, we can easily extend the above theory to multiple sensitive attributes as: 
\vspace{-0.2cm}
\begin{flalign}
\arg\min_{\mathcal{F}}V(\mathcal{F},\mathcal{D}) = - H(\mathbf{R}|\mathbf{F})+\lambda\sum_{k=1}^{K} H(\mathbf{X}_k|(\mathbf{F},\mathbf{P})).
\end{flalign}
\vspace{-0.2cm}

Therefore, we have the following theorem as:

\begin{theorem}
Given enough capacity of the discriminator network, the objective function in Eq.\eqref{eq:original_likelihood} is equivalent to  $\min_{\mathcal{F}} [-H(\mathbf{R}|\mathbf{F})+\lambda\sum_{k=1}^K H(\mathbf{X}_k|(\mathbf{F},\mathbf{P}))]$, i.e., minimizing the conditional entropy between the ratings and filtered embeddings, while maximizing the conditional entropy between each sensitive attribute and the filtered embeddings.
\end{theorem}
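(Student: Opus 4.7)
The plan is to reduce Theorem~4 to the single-attribute case already handled by Theorem~3, exploiting the fact that the multi-attribute value function in Eq.\eqref{eq:original_likelihood} decomposes additively across the $K$ sensitive attributes. Since Theorem~3 already delivered the conditional-entropy characterization for one attribute $k$ after substituting the Bayes-optimal discriminator (Lemma~1), all that remains is to bookkeep how the decomposition and the factor $K$ interact when one passes back to the full objective.

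First I would make the decomposition explicit. The discriminator network is the concatenation $\mathcal{D}=(\mathcal{D}^1,\ldots,\mathcal{D}^K)$, and in Eq.\eqref{eq:f1_loss} and Eq.\eqref{eq:f1_loss2} the log-likelihood terms $V_N$ and $V_S$ are each sums $\sum_{k=1}^{K} x_{uk}\ln \mathcal{D}^k(\cdot)$. Because the $k$-th summand depends only on $\mathcal{D}^k$, the inner minimization $\min_{\mathcal{D}}V(\mathcal{F},\mathcal{D})$ splits into $K$ independent minimizations, one per $\mathcal{D}^k$. The rating term $V_R$, which does not depend on $k$, is simply carried along; after renormalizing so that each per-attribute sub-objective carries a copy of $V_R$ scaled by $1/K$, one recovers exactly the form of Eq.\eqref{eq:original_dis}, in which the attribute-discrimination term has coefficient $\lambda K$ and the rating term appears unscaled. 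This is the identity already invoked in the excerpt (``independent combinations of each sub discriminator $\mathcal{D}^k$''), and I would prove it by a direct rewriting of expectations.

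Next I would apply Theorem~3 verbatim to each per-attribute sub-objective: plugging in the optimal $q^*_{\mathcal{D}^k}(x_{uk}\mid \mathbf{f}_u,\mathbf{p}_u)=\hat{p}(x_{uk}\mid \mathbf{f}_u,\mathbf{p}_u)$ from Lemma~1 and rewriting the resulting expectations as entropies of the induced distribution $\hat{p}$ in Eq.\eqref{eq:filter_dis}. This yields, for each $k$,
\begin{equation*}
\min_{\mathcal{F}} V(\mathcal{F},\mathcal{D}^{k^*}) \;=\; -H(\mathbf{R}\mid \mathbf{F}) + \lambda K \, H(\mathbf{X}_k\mid (\mathbf{F},\mathbf{P})).
\end{equation*}
Averaging (or equivalently summing and dividing by $K$) over the $K$ per-attribute objectives then reassembles the original $V(\mathcal{F},\mathcal{D})$, because the rating terms collapse to a single $-H(\mathbf{R}\mid\mathbf{F})$ while the attribute terms aggregate to $\lambda \sum_{k=1}^{K} H(\mathbf{X}_k\mid(\mathbf{F},\mathbf{P}))$, the factor $K$ cancelling against the $1/K$ from averaging.

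The main obstacle is purely bookkeeping: one has to be careful that the factor $K$ appearing inside Eq.\eqref{eq:original_dis} is an artifact of redistributing $V_R$ equally among the $K$ sub-problems, and that it cancels exactly when one sums the $K$ entropy expressions back up. Once this is tracked correctly, Theorem~4 follows immediately from Theorem~3 together with the additive decomposition of $V(\mathcal{F},\mathcal{D})$ across attributes; no new analytic ingredient beyond Lemma~1 is needed.
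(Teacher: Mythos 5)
Your proposal is correct and follows essentially the same route as the paper: the paper likewise writes the overall value function as $\tfrac{1}{K}\sum_{k=1}^{K}$ of per-attribute sub-objectives (each carrying the rating term and a $\lambda K$-scaled discriminator term), applies the single-attribute entropy characterization of Theorem~3 to each, and lets the $K$ and $1/K$ cancel on reassembly. Your bookkeeping of the factor $K$ matches the derivation in the paper's supplementary material exactly.
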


\section{Experiments}

\subsection{Experimental Setup}
 
\textbf{Datasets.}  \emph{MovieLens-1M}  is a benchmark dataset for recommender systems~\cite{harper2015movielens}. The dataset contains 6040 users' 1 million rating records to about 4000 movies. Users are associated with three attributes, including gender~(two classes), age~(seven classes), and occupation~(21 classes)~\footnote{https://grouplens.org/datasets/movielens/}. Similar as the previous works for fairness based recommendation~\cite{bose2019compositional}, we split the historical ratings into training and test with a ratio of 9:1.

\emph{Lastfm-360K} is a music recommendation dataset that contains users' ratings to artists collected from the music website of \emph{Last.fm}~\cite{celma2009music}. The dataset contains about 360 thousand users' 17 million records to 290 thousand artists. We treat the play times as the rating values. As the detailed rating values are in a large range, we first preprocess ratings with log transformations, and then normalize ratings into range 1 to 5.  Users are associated with a profile, including gender~(two classes), and age. For the age attribute, we transform ages into three classes~\footnote{http://ocelma.net/MusicRecommendationDataset/lastfm-360K.html}. Similar as many classical recommendation data split approaches, we split the historical ratings into training, validation, and test parts with the ratio of 7:1:2.

\begin{table*}[htb]
\caption{Performance on MovieLens-1M. We test performance on both the single attribute and the compositional setting with multiple sensitive attributes~(denoted as Com.). Smaller values mean better performance.} \label{tab:ml_all}
\vspace{-0.3cm}
\scalebox{0.80}{
\begin{tabular}{|l|l|l|l|l|l|l|l|l|l|l|l|l|}
\hline
\multirow{2}{*}{Sensitive Att.} & \multicolumn{2}{l|}{PMF} & \multicolumn{2}{l|}{GCN} & \multicolumn{2}{l|}{Non-parity} & \multicolumn{2}{l|}{ICML\_2019} & \multicolumn{2}{l|}{FairGo\_PMF} & \multicolumn{2}{l|}{FairGo\_GCN} \\ \cline{2-13} 
 & RMSE & AUC/F1 & RMSE & AUC/F1 & RMSE & AUC/F1 & RMSE & AUC/F1 & RMSE & AUC/F1 & RMSE & AUC/F1 \\ \hline
Gen. & \multirow{6}{*}{0.8681} & 0.6615 & \multirow{6}{*}{0.8564} & 0.7041 & 0.8621 & 0.8428 & 0.9203 & 0.5175 & 0.9150 & 0.5042 & 0.9068 & 0.5042 \\ \cline{1-1} \cline{3-3} \cline{5-13} 
Age &  & 0.3821 &  & 0.4215 & \textbackslash{} & \textbackslash{} & 0.9203 & 0.3420 & 0.9059 & 0.3220 & 0.9051 & 0.3140 \\ \cline{1-1} \cline{3-3} \cline{5-13} 
Occ. &  & 0.1332 &  & 0.1485 & \textbackslash{} & \textbackslash{} & 0.9186 & 0.1190 & 0.9367 & 0.1130 & 0.9069 & 0.1070 \\ \cline{1-1} \cline{3-3} \cline{5-13} 
Com.-Gen. &  & 0.6615 &  & 0.7041 & \textbackslash{} & \textbackslash{} & \multirow{3}{*}{0.9191} & 0.5389 & \multirow{3}{*}{0.9325} & 0.5026 & \multirow{3}{*}{0.9185} & 0.5134 \\ \cline{1-1} \cline{3-3} \cline{5-7} \cline{9-9} \cline{11-11} \cline{13-13} 
Com.-Age &  & 0.3821 &  & 0.4215 & \textbackslash{} & \textbackslash{} &  & 0.3620 &  & 0.3380 &  & 0.3260 \\ \cline{1-1} \cline{3-3} \cline{5-7} \cline{9-9} \cline{11-11} \cline{13-13} 
Com.-Occ. &  & 0.1332 &  & 0.1485 & \textbackslash{} & \textbackslash{} &  & 0.1240 &  & 0.1060 &  & 0.1250 \\ \hline
\end{tabular}
}
\end{table*}

\begin{table*}[htb] 
\caption{Performance on Lastfm-360K.} \label{tab:lastfm_all}
\vspace{-0.3cm}
\scalebox{0.80}{
\begin{tabular}{|l|l|l|l|l|l|l|l|l|l|l|l|l|}
\hline
\multirow{2}{*}{Sensitive Att.} & \multicolumn{2}{l|}{PMF} & \multicolumn{2}{l|}{GCN} & \multicolumn{2}{l|}{Non-parity} & \multicolumn{2}{l|}{ICML\_2019} & \multicolumn{2}{l|}{FairGo\_PMF} & \multicolumn{2}{l|}{FairGo\_GCN} \\ \cline{2-13} 
 & RMSE & AUC/F1 & RMSE & AUC/F1 & RMSE & AUC/F1 & RMSE & AUC/F1 & RMSE & AUC/F1 & RMSE & AUC/F1 \\ \hline
Gen. & \multirow{4}{*}{0.7112} & 0.5506 & \multirow{4}{*}{0.7034} & 0.5696 & 0.7346 & 0.6649 & 0.7259 & 0.5409 & 0.7096 & 0.5428 & 0.7072 & 0.5354 \\ \cline{1-1} \cline{3-3} \cline{5-13} 
Age &  & 0.4695 &  & 0.4716 & \textbackslash{} & \textbackslash{} & 0.7204 & 0.4682 & 0.7195 & 0.4689 & 0.7061 & 0.4672 \\ \cline{1-1} \cline{3-3} \cline{5-13} 
Com.-Gen &  & 0.5506 &  & 0.5696 & \textbackslash{} & \textbackslash{} & \multirow{2}{*}{0.7173} & 0.5379 & \multirow{2}{*}{0.7081} & 0.5347 & \multirow{2}{*}{0.7049} & 0.5367 \\ \cline{1-1} \cline{3-3} \cline{5-7} \cline{9-9} \cline{11-11} \cline{13-13} 
Com.-Age &  & 0.4695 &  & 0.4716 & \textbackslash{} & \textbackslash{} &  & 0.4688 &  & 0.4681 &  & 0.4662 \\ \hline
\end{tabular}
}
\end{table*}

\textbf{Experimental Setup and Evaluation.} Our model involves three steps, a pretrained recommendation algorithm, followed by the proposed \shortname model for fairness consideration. After that, we need to evaluate the fairness performance.
We first use the training data to complete the first two steps, with the rating records in the training data as ground truth preference data, and the user attributes in the training data as ground truth sensitive information. The validation data is used for model parameter tuning.   When finishing the model training, in order to evaluate whether the sensitive information is exposed by the learned model, similar as many works for fairness models~\cite{icml2018learning,bose2019compositional}, we randomly select 80\% users' attributes as ground truth and train a linear classification model by taking the learned fair representations. We test the classification accuracy on the remaining 20\% users for fairness evaluation.

For measuring the recommendation performance, we use Root Mean Squared Error~(RMSE) metric~\cite{koren2009MF}. For measuring the fairness goal, we calculate classification performance of the 20\% test users. As the binary attribute~(i.e., gender) is imbalanced on both datasets, with about 70\%  males and 30\% females, we use Area Under Curve~(AUC) metric for measuring binary classification performance. For the remaining attributes with multiple values, we use micro-averaged F1 measure~\cite{DBLPHandC18}. AUC or F1 can be used as a measure of whether the sensitive gender information is exposed in the representation learning process. The smaller values of these classification metrics denote better fairness performance with less sensitive information leakage.

As our proposed model is model-agnostic and can be applied to fair recommendation with multiple attributes, we design several experiments with different settings for model evaluation. First, we choose two base recommendation models: a free latent embedding model of PMF~\cite{mnih2008PMF} and a state-of-the-art GCN based recommendation model~\cite{AAAI2020revi}. As this GCN based recommendation model is originally designed with ranking based loss function, we modify it to the rating based loss function, and add the detailed rating values in the graph convolution process to facilitate our setting. Second, as the sensitive attribute setting varies, we perform experiments on both the single sensitive attribute setting and the compositional setting with multiple sensitive attributes. For example, we have three settings of one single sensitive attribute, i.e., gender, age, and occupation, and one compositional setting of three sensitive attributes on MovieLens-1M dataset.

\textbf{Baselines and Parameter Setting.}
We compare our proposed model with the following baselines, including state-of-the-art recommendation models of PMF~\cite{mnih2008PMF} and GCN~\cite{AAAI2020revi}.
To explicitly model the fairness metrics, we choose a state-of-the-art model that can leverage multiple sensitive attributes, i.e., ICML\_2019~\cite{bose2019compositional} as a baseline. Besides, we choose a fairness regularization based  model, i.e., Non-parity~\cite{wei2017beyond} as a baseline. Given a binary valued sensitive attribute, Non-parity defines different metrics for unfairness and incorporates the corresponding unfairness regularization  terms for recommendation. Each unfairness metric is based on the average rating prediction of the advantaged group with attribute value of 1 and the remaining group with attribute value of 0. Due to the constraint that Non-parity is suitable for binary valued attributes, we apply this baseline to the gender attribute on the two datasets.

In practice, in our proposed \shortname model, we choose MLPs as the detailed architecture of each filter and each discriminator.  The filtered embedding size is set as $D=64$. For MovieLens dataset, each filter network has 3 layers which the hidden layer sizes as 128 and 64 respectively, and the discriminator has 4 layers which the hidden layer sizes are 16 and 8 respectively. For Lastfm-360K dataset, each filter network has 4 layers with the hidden layer sizes as 128, 64, 32 respectively, and  each discriminator has 4 layers with the hidden layer sizes as 16, 8, and 4. We use LeaklyReLU as the activation function. The balance parameter $\lambda$ in Eq.\eqref{eq:original_likelihood} is set as 0.1 on MovieLens and 0.2 on Lastfm-360K. All the parameters are differentiable in the objective function, and we use Adam optimizer with the initial learning rate of 0.005.

\begin{table*}[htb]
\caption{Performance of different summary networks for ego-centric structure on MovieLens-1M, with ``value'' denotes the local value function aggregation, and ``learning'' denotes the learning based aggregation.
}\label{tab:ml_2layer} 
\vspace{-0.3cm}
\scalebox{0.85}{ 
\begin{tabular}{|l|l|l|l|l|l|l|l|l|l|l|l|l|}
\hline
\multirow{3}{*}{Senstive Att.} & \multicolumn{6}{c|}{FairGo\_PMF} & \multicolumn{6}{c|}{FairGo\_GCN} \\ \cline{2-13} 
 & \multicolumn{2}{c|}{L=1} & \multicolumn{2}{c|}{L=2(value)} & \multicolumn{2}{c|}{L=2(learning)} & \multicolumn{2}{c|}{L=1} & \multicolumn{2}{c|}{L=2(value)} & \multicolumn{2}{c|}{L=2(learning)} \\ \cline{2-13} 
 & RMSE & AUC/F1 & RMSE & AUC/F1 & RMSE & AUC/F1 & RMSE & AUC/F1 & RMSE & AUC/F1 & RMSE & AUC/F1 \\ \hline
Gen. & 0.9150 & 0.5042 & 0.9082 & 0.5045 & 0.9055 & 0.5018 & 0.9068 & 0.5042 & 0.9070 & 0.5065 & 0.9004 & 0.5014 \\ \hline
Age & 0.9059 & 0.3220 & 0.9077 & 0.3200 & 0.9001 & 0.3160 & 0.9051 & 0.3140 & 0.9036 & 0.3080 & 0.9045 & 0.3100 \\ \hline
OCC. & 0.9367 & 0.1130 & 0.9332 & 0.1150 & 0.9186 & 0.1060 & 0.9069 & 0.1070 & 0.9079 & 0.1090 & 0.9010 & 0.1000 \\ \hline
\end{tabular} 
}
\end{table*}

\begin{table*}[htb]
\vspace{-0.3cm}
\caption{Performance of different summary networks for ego-centric structure on Lastfm-360K.
}\label{tab:lastfm_2layer} 
\vspace{-0.3cm}
\scalebox{0.85}{ 
\begin{tabular}{|l|l|l|l|l|l|l|l|l|l|l|l|l|}
\hline
\multirow{3}{*}{Senstive Att.} & \multicolumn{6}{c|}{FairGo\_PMF} & \multicolumn{6}{c|}{FairGo\_GCN} \\ \cline{2-13} 
 & \multicolumn{2}{c|}{L=1} & \multicolumn{2}{c|}{L=2(value)} & \multicolumn{2}{c|}{L=2(learning)} & \multicolumn{2}{c|}{L=1} & \multicolumn{2}{c|}{L=2(value)} & \multicolumn{2}{c|}{L=2(learning)} \\ \cline{2-13} 
 & RMSE & AUC/F1 & RMSE & AUC/F1 & RMSE & AUC/F1 & RMSE & AUC/F1 & RMSE & AUC/F1 & RMSE & AUC/F1 \\ \hline
Gen. & 0.7096 & 0.5428 & 0.7025 & 0.5361 & 0.7020 & 0.5357 & 0.7072 & 0.5354 & 0.7091 & 0.5442 & 0.7068 & 0.5337 \\ \hline
Age & 0.7195 & 0.4689 & 0.7082 & 0.4678 & 0.7099 & 0.4666 & 0.7061 & 0.4672 & 0.7015 & 0.4691 & 0.7047 & 0.4669 \\ \hline 
\end{tabular} 
}
\vspace{-0.1cm}
\end{table*}

\subsection{The Overall Performance}

We report the overall results in Table~\ref{tab:ml_all} and Table~\ref{tab:lastfm_all}. In these two tables, our proposed \shortname adopts the simple ego-centric graph representation with weighted first order aggregation in Eq.\eqref{eq:s_first_nei_mean}. We have several observations from this table. First, when comparing the results of two state-of-the-art recommendation models of PMF and GCN, GCN has better recommendation performance~(smaller RMSE values) and exposes more sensitive information~(larger classification metric values). This is due to the fact that GCN directly models the graph structure for embedding learning, which alleviates the sparsity issue, and discovers some hidden features that are correlated with sensitive feature set. Second, we observe that all models that directly consider the sensitive information filter would decrease the recommendation performance to 5\% to 10\%, as we need to eliminate any latent dimensions that are useful for rating, but may expose the sensitive attribute. 
Non-parity does not achieve satisfactory performance on these two dataset. We guess a possible reason is that, the Non-parity baseline measures the discrepancy of the predicted ratings of the two groups, and does not directly remove sensitive attribute information in embeddings.
When comparing these fairness-aware models, FairGo\_GCN considers the correlation of entities from a graph perspective and reaches the best performance for both the rating prediction and fairness elimination task. As to the FairGo\_PMF, it has better fairness performance compared to ICML\_2019, but the recommendation performance of RMSE is not consistent, as it shows worse performance in the compositional setting. This is due to the fact that the base model~(i.e., PMF) in FairGo\_PMF does not perform well as  base graph embedding model. Please note that, the fairness results of occupation have a large variance. We guess a possible reason is that, the occupation values are imbalanced and have 21 distinct values. Given limited 6040 users, the adversary network is hard to train in practice. For the Lastfm dataset, Table~\ref{tab:lastfm_all} shows a similar overall trend  as analyzed above. Therefore, we conclude that our proposed \shortname framework could improve fairness with very little recommendation accuracy loss. By using a more advanced base recommendation model, our proposed FairGo\_GCN reaches the best performance for both recommendation and fairness. In the following, we choose FairGo\_GCN for detailed analysis since FairGo\_GCN shows better recommendation and fairness compared to FairGo\_PMF.

\subsection{Detailed Model Analysis}

\textbf{Performance of different user-centric subgraph modeling.} 
In this part, we would like to explore the performance under different higher order graph modeling techniques. We focus on the experimental settings on single attribute. We conduct experiments on the two proposed approaches:  local value function aggregation~(Eq.\eqref{eq:s_value_agg}) and learning based aggregation~(Eq.\eqref{eq:s_learning_agg}) with second-order user-centric subgraph. Specifically, the local value function aggregation is calculated as~$V_{S}=\lambda_1 V^1_{S}+\lambda_2 V^2_{S}$, while the two parameters~$\lambda_1$ and~$\lambda_2$ are set to~$4:1$ in FairGo\_PMF and~$1:1$ in FairGo\_GCN. The MLP structure in the learning based aggregation has two non-linear layers and one linear layer. The results on MovieLens and Lastfm-360K are shown in Table~\ref{tab:ml_2layer} and Table~\ref{tab:lastfm_2layer}. As can be observed from both tables, the learning based aggregation shows the best performance for all settings. The local value function aggregation shows better performance than the first order neighborhood modeling for most settings, as it relies on manual tuning of balance parameters. Therefore, we empirically conclude higher order graph structure can achieve better fairness results. By using the learning based subgraph modeling, our proposed model can further improve recommendation accuracy and fairness. However, we notice that modeling the higher order graph structure also introduces more runtime, and more difficulty in the model training process.

Please note that when considering the second order local graph structure, on average each user's ego-centric graph includes 10\% nodes on MovieLens-1M and about 5000 nodes on Lastfm-360K. If we further increase the layer size to 3, each user's subgraph largely overlaps with the subgraph of other users.  
Therefore, we do not report the results with more than 3 layers.

 \vspace{-0.4cm}
\begin{figure} [htb]
	\begin{center}
		\includegraphics[width=80mm]{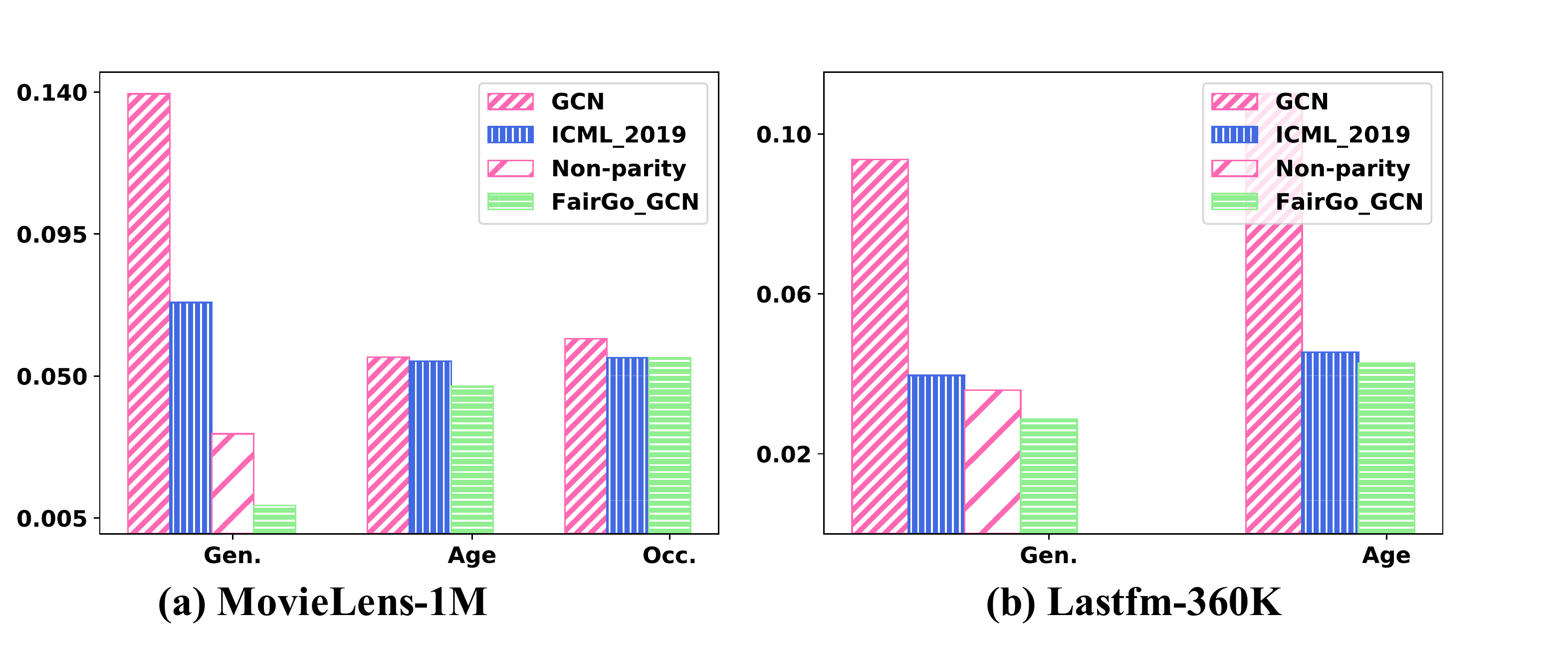}
	\end{center}
	\vspace{-0.4cm}
	\caption{\small{Performance of statistical parity measure.}}\label{fig:group_fair1}
\end{figure}

\vspace{-0.5cm}
\begin{figure} 
	\begin{center}
		\includegraphics[width=80mm]{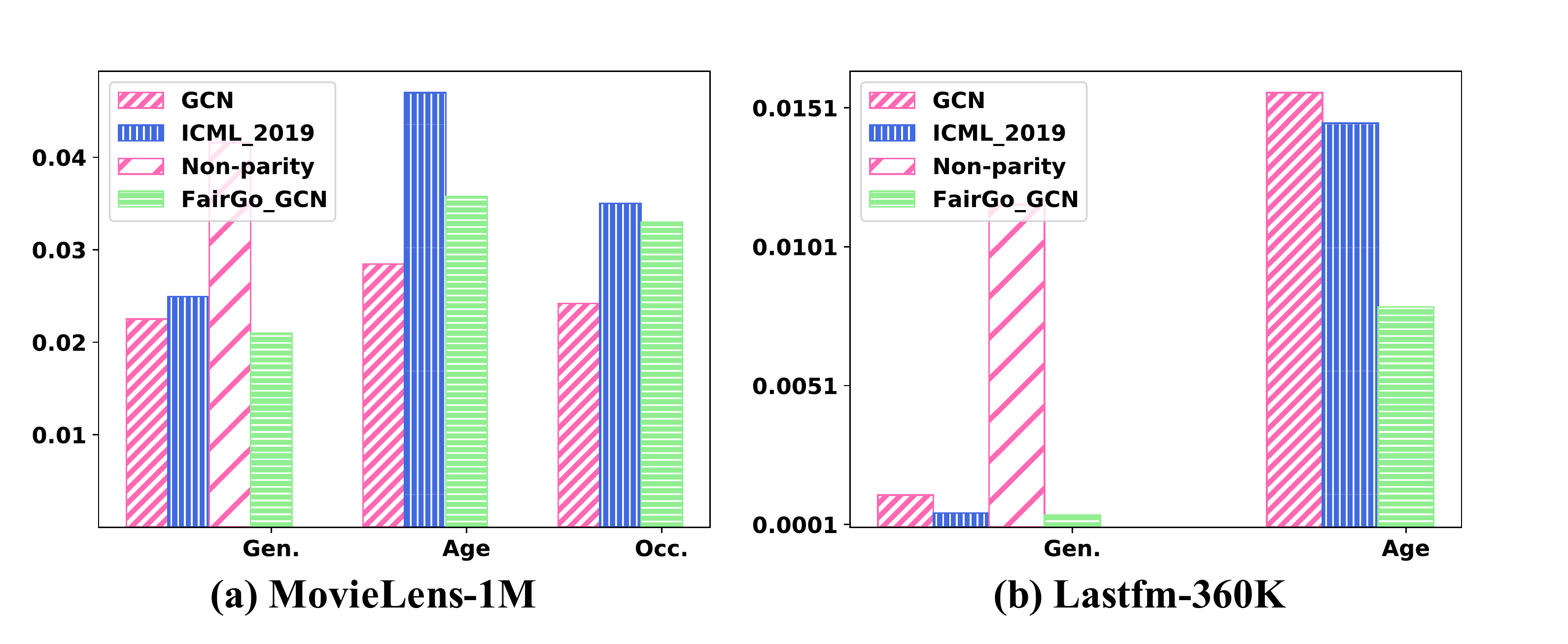}
	\end{center}
	\vspace{-0.4cm}
	\caption{\small{Performance of equal opportunity measure. }}\label{fig:group_fair2}
\end{figure}

\textbf{Relation to group fairness.}
As there are many fairness metrics, in this part, we would show the results of  our proposed model on group fairness measures. For all group fairness based metrics, statistical parity and equal opportunity are widely used. For attributes with multiple values, we borrow the idea of statistical parity. For attributes with binary values, we use the equal opportunity. The concrete formulas of two group fairness metrics is recorded in the supplementary material.  We show the results of statistical parity and equal opportunity in Figure~\ref{fig:group_fair1} and Figure~\ref{fig:group_fair2}. In short, our proposed model achieves the best results for binary gender attribute. Our proposed model also reaches the best results for the two group based fairness metrics on Lastfm-360K dataset.

\section{Conclusions and Future Work}
In this paper, we argued that most current works on fairness based models assumed independence of instances, and could not be well applied to the recommendation scenario. To this end, we proposed a \shortname model that considered fairness
from a graph perspective for any current recommendation models.  The proposed framework is model-agnostic and can be applied to multiple sensitive attributes. Experimental results on real-world datasets clearly showed the effectiveness of our proposed model.  In the future,  we would like to explore the potential of our proposed model to domain specific applications, such as job or education recommendation.

\section*{Acknowledgements}

This work was supported in part by grants from the National Natural Science Foundation of China~(Grant No. 61972125, U19A2079, U1936219, 61932009, 91846201),
and CAAI-Huawei MindSpore Open Fund.

\newpage
\bibliographystyle{ACM-Reference-Format}
\bibliography{2020-fair}
\newpage
\appendix

\section{Proofs}
We give the details of some proofs in Section 4, i.e., correlation between the overall value function~(Eq.(4)) and the sub value function~(Eq.(16)), and proofs of lemma 1.\\

\subsection{Correlation between the overall value function~(Eq.(4)) with multiple attributes and the sub value function~(Eq.(16)) that deals with a single attribute.}

The overall value function can be written as:

\begin{footnotesize}
\begin{flalign} \label{eq:original_overall}
V(\mathcal{F},\mathcal{D})&=\mathop{\mathbb{E}}\limits_{(u,v,r,x)\sim p(\mathbf{E},\mathbf{R},\mathbf{X})}
[\ln q_{\mathcal{R}}(r|(\mathbf{f}_u,\mathbf{f}_v,\mathbf{p}_u))
-\lambda \ln q_{\mathcal{D}}(x_{u}|(\mathbf{f}_u,\mathbf{f}_v,\mathbf{p}_u))]. \nonumber \\
&=\mathop{\mathbb{E}}\limits_{(u,v,r,x)\sim p(\mathbf{E},\mathbf{R},\mathbf{X})}
[\ln q_{\mathcal{R}}(r|(\mathbf{f}_u,\mathbf{f}_v,\mathbf{p}_u))]
\nonumber \\
&\quad -\lambda\sum_{k=1}^{K}\mathop{\mathbb{E}}\limits_{(u,v,r,x)\sim p(\mathbf{E},\mathbf{R},\mathbf{X})} \ln q_{\mathcal{D}^{k}}(x_{uk}|(\mathbf{f}_u,\mathbf{f}_v,\mathbf{p}_u)) \nonumber\\
&=1/K\sum_{k=1}^{K}\mathop{\mathbb{E}}\limits_{(u,v,r,x)\sim p(\mathbf{E},\mathbf{R},\mathbf{X})}
[\ln q_{\mathcal{R}}(r|(\mathbf{f}_u,\mathbf{f}_v,\mathbf{p}_u))\nonumber \\
&\quad -\lambda K \ln q_{\mathcal{D}^k}(x_{uk}|(\mathbf{f}_u,\mathbf{f}_v,\mathbf{p}_u))],\nonumber \\ \tag{22}
\end{flalign}
\end{footnotesize}
\noindent
where $(\mathbf{f}_u,\mathbf{f}_v,\mathbf{p}_u)=\mathcal{F}(G_u,\mathbf{E},\mathbf{X})$ is the mapping function from the origin embedding space to the filtered embedding space, and $p_u$ is summarized from the filtered embedding space. This Eq.~\eqref{eq:original_overall} corresponds to Eq.(4) in Section 3. Thus, the overall value function can be easily seen as a combination of each sub discriminator $\mathcal{D}^k$ with attribute $k$. Without loss of generality, we consider the overall value function with regard to the $k$-th attribute as:

\begin{footnotesize}
\begin{equation} \label{eq:original_dis}
V(\mathcal{F},\mathcal{D}^k)=\mathop{\mathbb{E}}\limits_{(u,v,r,x)\sim p(\mathbf{E},\mathbf{R},\mathbf{X})}
[\ln q_{\mathcal{R}}(r|(\mathbf{f}_u,\mathbf{f}_v,\mathbf{p}_u))-\lambda K \ln q_{\mathcal{D}^k}(x_{uk}|(\mathbf{f}_u,\mathbf{f}_v,\mathbf{p}_u))].\tag{23}
\end{equation}
\end{footnotesize}
\noindent This Eq.~\eqref{eq:original_dis} corresponds to Eq.(16) in Section 4.

Since both the rating prediction part and the discriminator rely on the filtered embeddings $\mathbf{F}=\mathcal{F}(G_u,\mathbf{E},\mathbf{X})$, we define an alternative distribution over the filtered embedding space $\mathbf{F}$ as follows:
\begin{footnotesize}
\begin{flalign} \label{eq:filter_dis}
 \hat{p}(\mathbf{f}_u,\mathbf{f}_v,\mathbf{p}_u,r,x) &=\int_{\mathbf{e}_u,\mathbf{e}_v} \hat{p}(\mathbf{e}_u, \mathbf{e}_v,\mathbf{f}_u,\mathbf{f}_v,\mathbf{p}_u,r,x)d(\mathbf{e}_u, \mathbf{e}_v) \nonumber\\
 &=\int_{\mathbf{e}_u,\mathbf{e}_v} p(\mathbf{e}_u, \mathbf{e}_v,r,x)p_{\mathcal{F}}(\mathbf{f}_u,\mathbf{f}_v,\mathbf{p}_u|\mathbf{e}_u, \mathbf{e}_v)d(\mathbf{e}_u, \mathbf{e}_v)   \nonumber \\
 &=\int_{\mathbf{e}_u,\mathbf{e}_v} p(\mathbf{e}_u,\mathbf{e}_v,r,x)\delta(\mathcal{F}(G_u,\mathbf{E},\mathbf{X})=(\mathbf{f}_u,\mathbf{f}_v,\mathbf{p}_u))d(\mathbf{e}_u,\mathbf{e}_v).\tag{24}
\end{flalign}
\end{footnotesize}
With the alternative distribution that relies on the filtered embedding space in Eq.\eqref{eq:filter_dis}, we replace Eq.\eqref{eq:original_dis} to:

\vspace{-0.2cm}
\begin{footnotesize}
\begin{flalign}\label{eq:filter_rewrite1}
V(\mathcal{F},\mathcal{D}^k)&=\mathop{\mathbb{E}}\limits_{(\mathbf{f}_u,\mathbf{f}_v,\mathbf{p}_u,r,x)\sim \hat{p}(\mathbf{f}_u,\mathbf{f}_v,\mathbf{p}_u,r,x)}[\ln q_{\mathcal{R}}(r|(\mathbf{f}_u,\mathbf{f}_v,\mathbf{p}_u))
\nonumber \\
&-\lambda K \ln q_{\mathcal{D}^k} (x_{uk}|(\mathbf{f}_u,\mathbf{f}_v,\mathbf{p}_u))], \nonumber \\
&=\mathop{\mathbb{E}}\limits_{(\mathbf{f}_u,\mathbf{f}_v,\mathbf{p}_u,r,x)\sim \hat{p}(\mathbf{f}_u,\mathbf{f}_v,\mathbf{p}_u,r,x)}[\ln q_{\mathcal{R}}(r|\mathcal{F}(G_u,\mathbf{E},\mathbf{X})) \nonumber\\
&-\lambda K \ln q_{\mathcal{D}^k} (x_{uk}|\mathcal{F}(G_u,\mathbf{E},\mathbf{X}))].\tag{25} 
\end{flalign}
\end{footnotesize}
\noindent This Eq.~\eqref{eq:filter_rewrite1} corresponds to Eq.(18) in Section 4.

From the above, we can split multiple attributes into independent combinations of single attributes for analysis. Thus, the analysis of a single attribute can be easily extended to multiple attributes naturally.

\subsection{Proofs of Lemma 1}

\begin{lemma} If the discriminator network has enough capacity, the optimal solution of $q^{*}_{\mathcal{D}^k}$ is $\hat{p}(x_{uk}|\mathbf{f}_u,\mathbf{p}_u)$.
\end{lemma}
\begin{proof}  
We begin with the value function with regard to the $k$-th attribute in the filtered embedding space:
\begin{footnotesize}
\begin{flalign}\label{eq:filter_rewrite}
V(\mathcal{F},\mathcal{D}^k)=\mathop{\mathbb{E}}\limits_{(\mathbf{f}_u,\mathbf{f}_v,\mathbf{p}_u,r,x)\sim \hat{p}(\mathbf{f}_u,\mathbf{f}_v,\mathbf{p}_u,r,x)}&[\ln q_{\mathcal{R}}(r|\mathcal{F}(G_u,\mathbf{E},\mathbf{X}))
\nonumber \\
&-\lambda K \ln q_{\mathcal{D}^k} (x_{uk}|\mathcal{F}(G_u,\mathbf{E},\mathbf{X}))]. \tag{26}
\end{flalign}
\end{footnotesize}

Note that, $\mathbf{p}_u$ is an aggregation of $\mathbf{f}_u$ and $\mathbf{f}_v$, and $\mathbf{f}_v$ is irrelevant to the best solution for discriminator.
In the above value function, with the fixed embeddings $\mathbf{F}$, only the second term  \small{$-\lambda K \ln q_{\mathcal{D}^k} (x_{uk}|\mathcal{F}(G_u,\mathbf{E},\mathbf{X})$} is correlated with the discriminator.
Given the equality constraints of the predicted probability distribution $\sum_{x}q_{\mathcal{D}^k}(x_{uk}|(\mathbf{f}_u, \mathbf{p}_u))=1$, 
we can obtain the Lagrangian dual optimization problem:

\begin{footnotesize}
\begin{flalign}
L(\alpha(h)) = \sum_{h}^{}\alpha(h)&(1-\sum_{x}^{}q_{\mathcal{D}^k}(x_{uk}|(\mathbf{f}_u, \mathbf{p}_u)))\nonumber \\
&-\mathop{\mathbb{E}}\limits_{(\mathbf{f}_u,\mathbf{f}_v,\mathbf{p}_u,r,x)\sim\hat{p}(\mathbf{f}_u,\mathbf{f}_v,\mathbf{p}_u,r,x)}\lambda  K\ln q_{\mathcal{D}^k}(x_{uk}|(\mathbf{f}_u, \mathbf{p}_u)).  \tag{27}
\end{flalign}
\end{footnotesize}
\noindent

To seek  the maximum value of $ L(\alpha(h))$, we take the partial derivative of $q_{\mathcal{D}^k}$ and let the partial derivative equals 0. 

\begin{footnotesize}
\begin{flalign} \label{eq:filter_dis1}
&\frac{\partial L(\alpha(h))}{\partial q_{\mathcal{D}^k}^{*}(x_{uk}|(\mathbf{f}_u, \mathbf{p}_u))} \nonumber \\&= -\sum_{h}\alpha(h)-\frac{\mathbb{E}_{(\mathbf{f}_u,\mathbf{f}_v,\mathbf{p}_u,r,x)\sim\hat{p}(\mathbf{f}_u,\mathbf{f}_v,\mathbf{p}_u,r,x)}\lambda K \ln q_{\mathcal{D}^k}(x_{uk}|(\mathbf{f}_u, \mathbf{p}_u))}{\partial q_{\mathcal{D}^k}^{*}(x_{uk}|(\mathbf{f}_u, \mathbf{p}_u))} \nonumber \\ 
&=-\sum_{h}\alpha(h)-\mathbb{E}_{(\mathbf{f}_u,\mathbf{f}_v,\mathbf{p}_u,r,x)\sim\hat{p}(\mathbf{f}_u,\mathbf{f}_v,\mathbf{p}_u,r,x)}(\frac{\lambda K}{q_{\mathcal{D}^k}^{*}(x_{uk}|(\mathbf{f}_u, \mathbf{p}_u))}). \nonumber \\ \tag{28}
\end{flalign}
\end{footnotesize}

By letting Eq.\eqref{eq:filter_dis1} equals zero, and we employ the equality constraint as $\sum_{x}q_{\mathcal{D}^k}(x_{uk}|(\mathbf{f}_u, \mathbf{p}_u))=1$,  we get:

\begin{flalign} \label{eq:filter_dis2}
\sum_{h}\alpha(h) = -\lambda  K\frac{\sum_{r}\hat{p}(\mathbf{f}_u,\mathbf{f}_v,\mathbf{p}_u,r,x)}{q_{\mathcal{D}^k}^{*}(x_{uk}|(\mathbf{f}_u, \mathbf{p}_u))} 
    = -\lambda K \hat{p}(\mathbf{f}_u,\mathbf{p}_u).\tag{29}
\end{flalign}

After that, we substitute Eq.\eqref{eq:filter_dis2} back to Eq.\eqref{eq:filter_dis1} to get the optimal discriminator as:

\begin{equation}\label{eq:best_dis}
q_{\mathcal{D}^k}^{*}(x_{uk}|(\mathbf{f}_u,\mathbf{p}_u))=\hat{p}(x_{uk}|\mathbf{f}_u,\mathbf{p}_u).\tag{30}
\end{equation}
\end{proof}

\section{Details of Group Fairness Results}
In this part, we measure fairness based on the classification accuracy of each sensitive attribute. Then, we show the results of our proposed model on group fairness measures.

\subsection{Statistical Parity}
For all group fairness based metrics, statistical group parity is widely used to measure the predicted rating discrepancy for binary valued sensitive attribute~\cite{bose2019compositional,yao2017beyond}. Correspondingly, we measure the statistical parity of binary attribute~(i.e., gender) in recommendation as: $1 /N \sum_{v=1}^{N}\left\|E_{u \in \operatorname{male}}\left[\hat{r}_{uv}\right]-E_{u \in \text {female}}\left[\hat{r}_{uv}\right]\right\|$. For attributes with multiple values, we borrow the idea of statistical parity and bin users into different groups based on the different attribute values. Then, we take the standard deviation of predicted ratings of each user group to measure statistical parity.

\subsection{Equal Opportunity}
Besides statistical group parity, equal opportunity is also a widely used group fairness metric. Equalized opportunity advances statistical group fairness by considering the parity of prediction accuracy of each group~\cite{NIPS2016equality}.  The equal opportunity measures group fairness of binary attribute~(i.e., gender) as:

\begin{footnotesize}
	\begin{equation*} 
	1 /N \sum_{v=1}^{N} \left\|E_{u \in \operatorname{male}}\left[||\hat{r}_{uv}-r_{uv}||\right]-E_{u \in \text {female}}\left[||\hat{r}_{uv}-r_{uv}||\right]\right\|.
	\end{equation*}
\end{footnotesize}

For attributes with multiple values, we use the idea of equal opportunity for binary values of attribute, and take the standard deviation of equal opportunity of each user group to measure group fairness. We only list our proposed \shortname under GCN as it shows better performance under PMF. Besides, the performance on Non-parity is only calculated for the binary attributes.

As shown in Figure~\ref{fig:group_fair1} and Figure~\ref{fig:group_fair2}, our proposed model achieves the best results for binary gender attribute. Our proposed model also reaches the best results for the two group based fairness metrics on Lastfm-360K dataset. However, our proposed model could not perform the best for sensitive attributes with multiple attributes on MovieLens under the equal opportunity metric. We guess a possible reason is that, the adversarial training process of \shortname relies on sufficient user data for training. As MovieLens is much smaller than Lastfm-360K, the performance drops when the attribute has multiple values with limited records.

\end{document}